\newtheorem{observation}{Observation} 
\newtheorem{clm}{Claim}
\newcommand{\evc}[0]{\operatorname{evc}}
\newcommand{\mvc}[0]{\operatorname{mvc}}
\title{A new lower bound for eternal vertex cover number}
\titlerunning{A new lower bound for eternal vertex cover number} 
\author{Jasine Babu \and Veena Prabhakaran}
\institute{Indian Institute of Technology Palakkad, India \email{jasine@iitpkd.ac.in,111704003@smail.iitpkd.ac.in}}
\authorrunning{J.\,Babu, V.\.Prabhakaran} 
\begin{document}
\maketitle
\begin{abstract}
We obtain a new lower bound for the eternal vertex cover number of an arbitrary graph $G$, in terms of the cardinality of a vertex cover of minimum size in $G$ containing all its cut vertices.  The consequences of the lower bound includes a quadratic time algorithm for computing the eternal vertex cover number of chordal graphs.
\end{abstract}
\section{Introduction}
Eternal vertex cover number of a graph is the minimum number of guards required to successfully keep defending attacks on a graph,
in a certain multi-round attack-defense game \cite{Klostermeyer2009}. The rules to play the game with $k$ guards on a graph $G$ are the following. Initially, the defender
places the $k$ guards on a subset of vertices of $G$. The positions of the guards defines an \textit{initial configuration}.
In each round of the game, the attacker chooses an edge $e$ of $G$ to attack. In response to the attack, the defender is 
free to move each of guard from its current position to an adjacent vertex or retain it in its current position. All guards are assumed to move in parallel, at the same time. 
The constraint to be satisfied is that at least one guard should move from an endpoint of $e$ to the other. 
If the defender is able to successfully move the guards satisfying this constraint, we say that the attack in the current round is successfully
defended. The resultant positions of the guards define the \textit{configuration} from where the next round of the attack-defense game continues. 
If the defender can keep on successfully defending any sequence of attacks, we say that the defender has a defense strategy on this graph, with $k$ guards. 
Eternal vertex cover number of a graph $G$, denoted by $\evc(G)$ is the minimum integer $k$ such that the defender has a defense strategy on $G$, with $k$ guards. 
When this game is played with $k$-guards, each configuration encountered in the game is equivalent to some function $f$ from $V$ to $\{0, 1, 2, \ldots, k\}$ 
such that $\sum_{v \in V} f(v)=k$ (where, for each $v \in V$, $f(v)$ will be the number of guards on $v$). 
A set of such configurations $\mathcal{C}$, such that the defender can start with any configuration in $\mathcal{C}$ as the initial configuration and keep moving between 
configurations in $\mathcal{C}$ for defending the attacks, is called an \textit{eternal vertex cover class} of $G$ and
each configuration in $\mathcal{C}$ is an \textit{eternal vertex cover configuration}. 
If $\mathcal{C}$ is an eternal vertex cover class of $G$ such that the number of guards in the configurations in $\mathcal{C}$ 
is equal to $\evc(G)$, then $C$ is a \textit{minimum eternal vertex cover class} of $G$.  
There are two popular versions of the game: the former in which in any configuration, at most one guard is allowed on a vertex and the latter in which this restriction is not there.
Since the main structural result in this paper is a lower bound for eternal vertex cover number, 
we will be assuming the version of the game in which there is no restriction on the number of guards allowed on a vertex. It can be easily verified
that our proofs work the same way in the other model of the game as well.

From the description of the game, it is clear that, in any configuration, if at least one of the endpoints of an edge is not occupied, the defender
will not be able to successfully defend an attack on that edge. Therefore, $\mvc(G) \le \evc(G)$, where $\mvc(G)$ denotes the cardinality of a minimum 
vertex cover of $G$. This is the only general lower bound known for the parameter, so far in literature. 
In this work, we prove that the size of a minimum sized vertex cover of $G$ that contains all cut vertices of $G$ is also a lower bound for $\evc(G)$. 
This improved lower bound has many algorithmic consequences, including a quadratic time algorithm for computing the eternal vertex cover number of chordal graphs and a 
PTAS for computing the eternal vertex cover number of internally triangulated planar graphs. These results generalize the results presented in \cite{BCFPRW:Arxiv}.
\section{A new lower bound}
\begin{definition}[$x$-components and $x$-extensions]
 Let $x$ be a cut vertex in a graph $G$ and $H$ be a component of $G \setminus x$. Let 
 $G'$ be the induced subgraph of $G$ on the vertex set $V(H) \cup \{x\}$. Then $G'$ is called an $x$-component of $G$ and $G$ is called an $x$-extension of $G'$. 
\end{definition}

Let $G'$ be a graph and $G$ be an $x$-extension of $G'$ for some $x \in V(G')$. 
It is easy to see that in every eternal vertex cover configuration of $G$ at least $\mvc(G')$ guards are
present on $V(G')$. However,
it is interesting to note that it is possible to have less than $\evc(G')$ guards present on $V(G')$
in some eternal vertex 
cover configurations of $G$. Hence,  though 
a lower bound for $\evc(G)$ can be obtained in terms of the minimum vertex cover numbers of the 
$x$-components of $G$, it may not be possible to obtain a non-trivial lower bound for $\evc(G)$ in 
terms of the eternal vertex cover numbers of the $x$-components of $G$.
Here, we introduce a new parameter and show that it is a lower bound for $\evc(G)$.
\begin{definition}
 Let $G$ be a graph and $X \subseteq V(G)$. The smallest integer $k$, such that $G$ has a vertex cover $S$ of cardinality $k$ with 
 $X \subseteq S$, is denoted by $\mvc_{X}(G)$.  
\end{definition}
To simplify the expressions that appear later, we introduce the following notations.
For any vertex $v \in V(G)$, $\mvc_{\{v\}}(G)$ will be denoted by $\mvc_v(G)$
and for any graph $G$ and any set $X$, the notation $X(G)$ will be used to denote the 
set $X \cap V(G)$.
\begin{definition}
 Let $X$ be the set of cut vertices of a graph $G$ and let $x \in X$. The set of $x$-components of $G$ will be denoted as $\mathcal{C}_x(G)$. 
 If $B$ is any block of $G$, then the set of $B$-components of $G$ is defined as\\
 $\mathcal{C}_B(G)= \left \{ G_i: G_i \in \mathcal{C}_x(G) \text{ for some $x \in X(B)$} \text{ and $G_i$ edge disjoint with $B$}\right\}$.
\end{definition}

%

\begin{definition}[EVC-Cut-Property]\label{def:evc-cut-property}
Let $G'$ be a graph and let $X'$ be the set of cut vertices of $G'$.
The graph $G'$ is said to have the EVC-cut-property if for every graph $G$ that is an $x$-extension of $G'$ for some $x \in V(G')$, it is true that 
in each eternal vertex cover configuration of $G$, at least 
$\mvc_{_{X' \cup \{x\}}}(G')$ guards are present on the vertices of $G'$, out of which at least $\mvc_{_{X' \cup \{x\}}}(G')-1$ guards 
are present on $V(G') \setminus \{x\}$. 
\end{definition}
\begin{note}
For a graph $G'$  to satisfy the EVC-Cut-Property, it is not necessary that the vertex $x$ is occupied by a guard in every eternal 
vertex cover configuration of an $x$-extension $G$ of $G'$. 
All the $\mvc_{_{X' \cup \{x\}}}(G')$ (or more) guards could be on vertices other than $x$. 
\end{note}
\begin{note}
 Definition~\ref{def:evc-cut-property} gives some lower bounds on the number of guards and not on the number of vertices with guards. 
 Note that, if more than one guard is allowed on a vertex, then these two numbers could be different.  
\end{note}

The following two lemmas are easy to obtain, using a straightforward counting argument. 

\begin{lemma}\label{lem:cut-sum}
 Let $G$ be a graph and $X$ be the set of cut vertices of $G$. For any $x \in X$, \ 
 \[ \mvc_{_{X\cup \{x\}}}(G)= \mvc_{_X}(G)=1+\sum_{G_i \in \mathcal{C}_x(G)}{\left[\mvc_{_{X(G_i)}}\left(G_i\right) -1\right]}.\]
\end{lemma}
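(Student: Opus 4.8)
The plan is to observe first that the equality $\mvc_{X\cup\{x\}}(G)=\mvc_{X}(G)$ is immediate, since $x\in X$ forces $X\cup\{x\}=X$. The real content is therefore the second equality, which I would establish as two matching inequalities, exploiting the way the $x$-components in $\mathcal{C}_x(G)=\{G_1,\dots,G_k\}$ decompose $G$.

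Before the counting, I would record three structural facts. First, since the $G_i$ arise from the distinct components of $G\setminus x$, any two of them meet in exactly $x$, i.e. $V(G_i)\cap V(G_j)=\{x\}$ for $i\neq j$, while together the $V(G_i)$ cover $V(G)$. Second, every edge of $G$ lies in exactly one $G_i$: an edge incident to $x$ belongs to the component of its other endpoint, and an edge avoiding $x$ has both endpoints in the same component of $G\setminus x$. Third, because $x\in X$ we have $x\in X(G_i)$ for every $i$, so any vertex cover of $G_i$ witnessing $\mvc_{X(G_i)}(G_i)$ must contain $x$.

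For the upper bound, I would choose, for each $i$, a vertex cover $S_i$ of $G_i$ with $X(G_i)\subseteq S_i$ and $|S_i|=\mvc_{X(G_i)}(G_i)$, and set $S=\bigcup_i S_i$. By the edge-partition fact, $S$ covers every edge of $G$; and since each $X(G_i)\subseteq S_i$ while every cut vertex lies in some $G_i$, we obtain $X\subseteq S$, so $S$ is feasible for $\mvc_{X}(G)$. As each $S_i$ contains $x$ and the $S_i$ are otherwise pairwise disjoint, the sets $S_i\setminus\{x\}$ are disjoint, giving $|S|=1+\sum_i(|S_i|-1)$ and hence $\mvc_{X}(G)\le 1+\sum_i(\mvc_{X(G_i)}(G_i)-1)$.

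For the matching lower bound, I would take an optimal cover $S$ for $\mvc_{X}(G)$ (so $X\subseteq S$) and restrict it by $S_i=S\cap V(G_i)$. Each $S_i$ is a vertex cover of the induced subgraph $G_i$ containing $X(G_i)$, whence $|S_i|\ge \mvc_{X(G_i)}(G_i)$; moreover $x\in S_i$ for all $i$ and $\bigcup_i S_i=S$, so the same disjointness counting yields $|S|=1+\sum_i(|S_i|-1)\ge 1+\sum_i(\mvc_{X(G_i)}(G_i)-1)$. Combining the two inequalities finishes the proof. The only place needing genuine care is the counting: one must check that the pieces overlap in exactly the single vertex $x$ rather than merely in a subset of $\{x\}$, which is precisely what the third structural fact guarantees, so that each component contributes $|S_i|-1$ and $x$ is counted once overall.
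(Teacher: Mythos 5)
Your proof is correct and is exactly the ``straightforward counting argument'' the paper alludes to (the paper states this lemma without writing out a proof): the first equality is trivial since $x\in X$, and the second follows from the edge-partition of $G$ into the $x$-components, the fact that the $V(G_i)$ pairwise intersect in exactly $\{x\}$, and the fact that $x\in X(G_i)$ forces $x$ into every feasible cover of each piece, so both inequalities follow by restricting or uniting covers and counting $x$ once. Nothing is missing.
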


\begin{lemma}\label{lem:block-sum}
 Let $G$ be a graph and $X$ be the set of cut vertices of $G$. 
 If $B$ is a block of $G$ and $v$ is any vertex of $B$ such that $v \notin X(B)$, then\\ 
 \[ \mvc_{_{X\cup \{v\}}}(G)=\mvc_{_{X(B) \cup \{v\}}}(B)+\sum_{G_i \in \mathcal{C}_B(G)}{\left[\mvc_{_{X(G_i)}}(G_i) -1\right]}.\]
\end{lemma}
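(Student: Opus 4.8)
The plan is to prove the identity by establishing the two opposite inequalities, both resting on a single structural decomposition of $G$ around the block $B$. First I would record the relevant block--cut-tree facts. Since each $G_i \in \mathcal{C}_B(G)$ is an $x$-component (for some $x \in X(B)$) that is edge-disjoint from $B$ --- that is, not the unique $x$-component containing $B$ --- it meets $B$ in exactly its attaching cut vertex $x(i) \in X(B)$. The edge sets $E(B)$ and the $E(G_i)$ partition $E(G)$, and the vertex set decomposes as the disjoint union
\[ V(G) = V(B) \sqcup \bigsqcup_{G_i \in \mathcal{C}_B(G)} \bigl(V(G_i) \setminus \{x(i)\}\bigr), \]
because two $B$-components attached at the same cut vertex share only that vertex, those attached at different cut vertices of $B$ are vertex-disjoint, and no vertex of $V(B) \setminus X(B)$ lies in any $G_i$. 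These are the only nonroutine facts and follow from the uniqueness of paths in the block--cut tree (assuming $G$ connected, or else working within the component containing $B$).

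For the lower bound, let $S$ be any vertex cover of $G$ with $X \cup \{v\} \subseteq S$, and set $S_B = S \cap V(B)$ and $S_i = S \cap V(G_i)$. Because the edges partition as above, $S_B$ is a vertex cover of $B$ containing $X(B) \cup \{v\}$ and each $S_i$ is a vertex cover of $G_i$ containing $X(G_i)$, so $|S_B| \ge \mvc_{X(B) \cup \{v\}}(B)$ and $|S_i| \ge \mvc_{X(G_i)}(G_i)$. Since every attaching cut vertex $x(i) \in X(B) \subseteq S$ already lies in $S_i$, the vertex partition gives
\[ |S| = |S_B| + \sum_{G_i \in \mathcal{C}_B(G)} \bigl(|S_i| - 1\bigr) \ge \mvc_{X(B) \cup \{v\}}(B) + \sum_{G_i \in \mathcal{C}_B(G)} \bigl(\mvc_{X(G_i)}(G_i) - 1\bigr). \]

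For the upper bound I would glue optimal covers of the pieces. Take a minimum vertex cover $T_B$ of $B$ containing $X(B) \cup \{v\}$ and, for each $G_i$, a minimum vertex cover $T_i$ of $G_i$ containing $X(G_i)$, and set $T = T_B \cup \bigcup_i T_i$. Then $T$ covers $E(G)$ (the edges partition), $X \subseteq T$ (every cut vertex of $G$ lies in $V(B)$ or some $V(G_i)$ and is forced into the corresponding cover), and $v \in T_B$, so $T$ is feasible. The count is exact because each attaching cut vertex $x(i)$ lies in both $X(B)$ and $X(G_i)$ and is therefore present in $T_B$ and in $T_i$; the pieces thus agree on every shared vertex and, by the same vertex partition, $|T| = |T_B| + \sum_i (|T_i| - 1)$, which equals the right-hand side. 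The two inequalities together give the claimed equality.

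The only real obstacle is the structural decomposition, and in particular the clean role of $v$. The hypothesis $v \notin X(B)$ is exactly what is needed: it makes $v$ a non-cut vertex of $G$, so $v$ lies in $V(B)$ and in no $G_i$, and hence the constraint $v \in S$ (respectively $v \in T$) localizes entirely to the block $B$, contributing only to the term $\mvc_{X(B) \cup \{v\}}(B)$ and leaving every component term untouched. Once the vertex and edge partitions and the forced membership of the attaching cut vertices are in place, both directions reduce to the bookkeeping above, the inclusion--exclusion collapsing to a disjoint-union count precisely because each shared cut vertex is required in every cover under consideration.
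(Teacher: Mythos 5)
Your proof is correct: the edge/vertex decomposition of $G$ into $B$ and the $B$-components meeting it only at their attaching cut vertices, followed by the two inequalities with the $-1$ corrections for the shared (and forced) cut vertices, is exactly the ``straightforward counting argument'' the paper invokes when it states this lemma without proof. Nothing is missing; the connectivity caveat and the observation that $v \notin X(B)$ localizes the constraint $v \in S$ to the block $B$ are the right points to flag.
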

\begin{lemma}\label{lem:evc-cut-property}
 Every graph satisfies EVC-cut-property. 
\end{lemma}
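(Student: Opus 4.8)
The plan is to argue by induction on the number of blocks of $G'$, establishing the two lower bounds of Definition~\ref{def:evc-cut-property} simultaneously for every $x$-extension $G$ and every eternal vertex cover configuration $C$ of $G$. The only elementary input is that the set of occupied vertices of $C$ is a vertex cover of $G$, so its trace on the induced subgraph $G'$ is a vertex cover of $G'$; hence whenever $x$ and all of $X'$ happen to carry guards, the bound $\mvc_{X'\cup\{x\}}(G')$ follows at once by counting distinct occupied vertices. All the work is in the cases where some vertex of $X'\cup\{x\}$ is empty, and the point of forcing these vertices into the cover is exactly to make the per-block bounds add up, via Lemma~\ref{lem:cut-sum} and Lemma~\ref{lem:block-sum}, without the loss that a plain vertex-cover count suffers at a cut vertex.

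The engine for forcing the external vertex $x$ when it is unoccupied is a single \emph{flooding} attack. Writing $m$ for the number of guards on $V(G')$, I attack an edge $xy$ with $y\in V(G')$; since $x$ is empty and the occupied set covers $G$, the vertex $y$ is occupied and the only admissible response moves a guard from $y$ onto $x$. Because $x$ is the unique vertex through which $V(G')$ meets the rest of $G$, no guard can reach $V(G')\setminus\{x\}$ from outside in one round, so all $m$ guards remain inside $V(G')$ after the move and now occupy a vertex set that is a vertex cover of $G'$ containing $x$. Counting distinct occupied vertices, this set has size at most $m$ and at least $\mvc_x(G')$, forcing $m\ge\mvc_x(G')$ — otherwise the attack is undefendable, contradicting that $C$ is an eternal vertex cover configuration. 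This is the step that charges the extra guard for $x$ that a static count would miss; the internal cut vertices of $G'$ will instead be charged by a counting surplus.

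For the base case $G'$ is a single block, so $X'=\varnothing$ and the target is $\mvc_x(G')$; as $x$ is the only boundary vertex there is no leakage and the flooding attack applies directly. For the inductive step I decompose the target. If $x$ is itself a cut vertex of $G'$, I split at $x$ with Lemma~\ref{lem:cut-sum}: each $x$-component $G_i$ of $G'$ is also an $x$-component of $G$, so the induction hypothesis applies to $(G_i,x)$ and bounds the guards on $V(G_i)\setminus\{x\}$ by $\mvc_{X'(G_i)}(G_i)-1$; summing over the disjoint interiors and adding one when $x$ is occupied reproduces $\mvc_{X'\cup\{x\}}(G')$. When $x$ is empty here I instead invoke the stronger first bound on each of the at least two $x$-components — with $x$ empty each must carry its full $\mvc_{X'(G_i)}(G_i)$ guards — and the resulting surplus already beats the target, so no attack is needed. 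If $x$ is not a cut vertex of $G'$, I split at the block $B$ containing $x$ with Lemma~\ref{lem:block-sum}, dispatch every component hanging off $B$ by the induction hypothesis exactly as above, and reduce to bounding the guards on the single block $B$ below by $\mvc_{X'(B)\cup\{x\}}(B)$.

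The hard part will be this last reduction. In $G$ the block $B$ is flanked by several cut vertices — the attachment points $X'(B)$ of the hanging components together with the external vertex $x$ — and, unlike a clean $x$-component, guards can leak out of $V(B)$ through any of them, so the flooding count no longer traps all the guards. I expect to resolve this by never forcing more than one boundary vertex of $B$ at a time: each internal $z\in X'(B)$ is charged, when it is empty, through the ``at least two components'' surplus of its own induction step rather than by an attack on $B$, leaving $x$ as the only vertex that must be forced inside $B$, at which point $B$ plays the role of the base case and the flooding attack goes through. A secondary nuisance to track throughout, because the model allows several guards on one vertex, is the gap between the number of guards and the number of distinct occupied vertices — the latter is what the flooding argument controls — and one must check in particular that the simultaneous second bound, on $V(G')\setminus\{x\}$, survives guards stacking on the shared cut vertices.
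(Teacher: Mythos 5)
Your proposal follows the paper's proof essentially step for step: induction on the number of blocks, a base case resolved by attacking an edge incident at $x$ and counting the occupied vertices of the resulting vertex cover of $G'$ containing $x$, and an inductive step split according to whether $x$ is a cut vertex of $G'$, using Lemmas~\ref{lem:cut-sum} and~\ref{lem:block-sum}, with each empty internal cut vertex of $B$ charged by the surplus guard from its hanging components and $x$ forced by one final attack. The ``hard part'' you flag in Case~2 is exactly what the paper's Claim~\ref{clm:clm1} does: it runs your surplus accounting over all of $V(G')\setminus\{x\}$ rather than over $V(B)$ (so leakage out of $B$ into the hanging components is irrelevant), obtaining the bound $l-1$ for \emph{every} configuration, and the attack at $x$ then yields the contradiction because no guard can enter $V(G')\setminus\{x\}$ from outside $G'$ in one round.
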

\begin{proof}
The proof is by induction on the number of blocks of the graph. 
 First consider a graph $G'$ with a single block. Let $x$ be any vertex of $G'$ and $G$ be an $x$-extension of $G'$. 
 Let $C$ be an eternal vertex cover configuration of $G$ and let $S$ be the set of vertices of $G$ on which guards are present in $C$.
 Since $C$ is an eternal vertex cover configuration of $G$, $S$ must be a vertex cover of $G$ and $S \cap V(G')$ must be a vertex cover of $G'$. 
 Therefore, $|S \cap V(G')| \ge \mvc(G')$.
  If $|S \cap V(G')| \ge \mvc_{x}(G')$, then there are at least $\mvc_{x}(G')$ guards on $V(G')$ and  at least $\mvc_{x}(G')-1$ guards on 
 $V(G') \setminus \{x\}$, as we need to prove.  Also, it is easy to see that $\mvc_{x}(G') \le  \mvc(G')+1$. 
 Therefore, we are left with the case when $\mvc(G')=|S \cap V(G')| < \mvc_{x}(G')=\mvc(G')+1$.  This implies that $x \notin S$.
 Thus, in the remaining case to be handled, the number of vertices on which guards are present is exactly $\mvc(G')$ and there is no guard on $x$. 
 
 From this point, let us focus on the number of guards on $V(G')$ and not just the number of vertices that are occupied. If there are
 more than $\mvc(G')$ guards in $V(G')$, then the conditions we need to prove are satisfied for the configuration $C$.
 In the remaining case, we have exactly $|S \cap V(G')|=\mvc(G')$ guards in $V(G')$, with $x \notin S$. In this case, we will derive a contradiction.
 
 Consider an attack on an edge $xv$ incident at $x$, where $v \in V(G')$. 
Let $\tilde{C}$ be the new configuration, after defending this attack and $\tilde{S}$ be the set of vertices on which guards are present in $\tilde{C}$. 
In the transition from $C$ to $\tilde{C}$, a guard must have moved from $v$ to $x$. Also, $x$ being a cut vertex, 
no guard can move from $V(G) \setminus V(G')$ to $V(G') \setminus \{x\}$. Therefore, $|\tilde{S} \cap V(G')|= |S \cap V(G')| = \mvc(G')$. 
But, this is a contradiction because $\tilde{S} \cap V(G')$ is a minimum vertex cover of $G'$ containing $x$, but we have $\mvc(G') < \mvc_{x}(G')$. 

Thus, the lemma holds for all graphs with only one block.  
Now, as induction hypothesis, assume that the lemma holds for any graph $G'$ with at most $k$ blocks. We need to show that the lemma 
holds for any graph with $k+1$ blocks. 
 
 Let $G'$ be an arbitrary graph with $k+1$ blocks and let $x$ be an arbitrary vertex of $G'$. Let $X'$ be the set of cut vertices of 
 $G'$ and let $G$ be an arbitrary $x$-extension of $G'$. Let $C$ be an arbitrary eternal vertex cover configuration of $G$ and let $S$
 be the set of vertices on which guards are present in $C$. 
 Let $l=\mvc_{_{X'\cup \{x\}}}(G')$.
 We need to show that there are at least $l$ guards on $V(G')$ in $C$ and at least $l-1$ guards on  $V(G') \setminus \{x\}$. Let $t$ be
 the number of guards on $V(G')$ in $C$.
 We split our proof into two cases based on whether $x$ is a cut vertex in $G'$ or not. \\\\
\textbf{Case 1. } $x$ is a cut vertex of $G'$:

In this case, by our induction hypothesis, for each $x$-component $G_i$ of $G'$, at least $\mvc_{_{X'(G_i)}}\left(G_i\right)$ guards are 
on $V(G_i)$ in the configuration $C$.
There are two possible sub-cases.
\begin{enumerate}
 \item[(a)] If $x$ is not occupied by a guard in $C$, then by induction hypothesis,\\ $t \ge \sum_{G_i \in \mathcal{C}_x(G')}{\mvc_{_{X'(G_i)}}\left(G_i\right)}$. 
Since $\mathcal{C}_x(G')$ is non-empty, by Lemma~\ref{lem:cut-sum}, it follows that $t \ge \mvc_{X'\cup \{x\}}(G')=l$. 
Since $x$ is not occupied, the number of guards on  $V(G') \setminus \{x\}$ is $t$ itelf, where $t \ge l$, as shown.
\item[(b)] If $x$ is occupied by a guard in $C$, still, in order to satisfy the induction hypothesis for all $x$-components of $G'$, 
the number of guards on $V(G') \setminus \{x\}$ must be at least $\sum_{G_i \in \mathcal{C}_x(G')}{\left(\mvc_{_{X'(G_i)}}\left(G_i\right) -1\right)}$. 
Therefore, by Lemma~\ref{lem:cut-sum}, it follows that the number of guards on $V(G') \setminus \{x\}$ is at least $l-1$ and $t \ge l$.
\end{enumerate}
\textbf{Case 2. } $x$ is not a cut vertex of $G'$:

Let $B$ be the block of $G'$ that contains $x$. By Lemma~\ref{lem:block-sum}, we have:
\begin{equation}\label{eq1}
 l= \mvc_{_{X'(B) \cup \{x\}}}(B)+\sum_{G_i \in \mathcal{C}_B(G')}{\left(\mvc_{_{X'(G_i)}}(G_i) -1\right)}
\end{equation}
Before proceeding with the proof, we establish the following claim.
  \begin{clm}\label{clm:clm1}
 Suppose $C'$ is an eternal vertex cover configuration of $G$. Then the number of guards on $V(G') \setminus \{x\}$ in configuration $C'$ is at least $l-1$. 
  \end{clm}
\begin{proof} 
    ~ To count the number of guards on $V(G') \setminus \{x\}$, we count the total number of guards on the $B$-components of $G'$ and
    the number of guards on the remaining vertices separately and add them up. 
    \begin{itemize}
    \item First, we will count the total number of guards on the $B$-components of $G'$. For each $B$-component $G_i$ of $G'$, let 
    $k_i=\mvc_{_{X'(G_i)}}(G_i)$. For each cut vertex $v \in  X'(B)$, 
    let $\mathcal{C}_v$ denote the family of $B$-components of $G'$ that intersect at the cut vertex $v$ and let $n_v$ denote 
 $|\mathcal{C}_v|$. 
  Consider a $B$-component $G_i$ of $G'$. 
  By our induction hypothesis, the number of guards on $V(G_i)$ is at least $k_i$ in $C'$. Moreover, since $G_i$ is connected to $B$
  by a single cut vertex, from the induction hypothesis it follows 
  that the number of guards on $V(G_i) \setminus B$ is at least $k_i-1$. Note that, for each cut vertex $v \in X'(B)$, the total number
  of guards on $\bigcup_{G_i \in C_v} V(G_i)$ must be at least 
  $1 + \sum_{i : G_i \in C_v} (k_i -1)$, to satisfy the above requirement. By summing this over all the cut vertices in $X'(B)$, the total
  number of guards on $\bigcup_{G_i \in \mathcal{C}_B(G')} V(G_i)$ must be at 
  least $|X'(B)|+ \sum_{G_i \in \mathcal{C}_B(G')}{\left(\mvc_{_{X'(G_i)}}(G_i) -1\right)}$. 
   \item  Now, we will count the number of guards on the remaining vertices. To cover the edges inside the block $B$ that are not 
   incident at any vertex in $X'(B)$, at least $\mvc(B \setminus X'(B))$ 
   vertices of $B \setminus X'(B)$ are to be occupied in $C'$. If $x$ is occupied in $C'$, then at least $\mvc_x(B \setminus X'(B))$ 
   vertices of $B \setminus X'(B)$ are occupied in $C'$.
   Hence, irrespective of whether $x$ is occupied in $C'$ or not, the number of guards on $(V(B) \setminus X'(B)) \setminus \{x\}$ is at least $\mvc_x(B \setminus X'(B))-1$.    
   \end{itemize}
 Therefore, the total number of guards on $V(G') \setminus \{x\}$ is at least 
 $\mvc_x(B \setminus X'(B))-1 + |X'(B)|+ \sum_{G_i \in \mathcal{C}_B(G')}{\left(\mvc_{_{X'(G_i)}}(G_i) -1\right)}$. Since 
 $\mvc_x(B \setminus X'(B))+ |X'(B)|=\mvc_{_{X'(B)\cup \{x\}}}(B)$, we can conclude 
 that the number of guards on $V(G') \setminus \{x\}$ is equal to  
 $\mvc_{_{X'(B)\cup \{x\}}}(B)-1+ \sum_{G_i \in \mathcal{C}_B(G')}{\left(\mvc_{_{X'(G_i)}}(G_i) -1\right)}$. Comparing this expression 
 with Equation~(\ref{eq1}), we can see that the number of guards on $V(G') \setminus \{x\}$ is at least $l-1$. 
  \qed
  \end{proof}
 Now, we continue with the proof of Lemma~\ref{lem:evc-cut-property}. There are two possible sub-cases.
 \begin{enumerate}
  \item[(a)]  If $x$ is occupied by a guard in $C$, then by Claim~\ref{clm:clm1}, it follows that the number of guards on
  $V(G') \setminus \{x\}$ is at least $l-1$ 
  and the number of guards on $V(G')$ is at least $l$, as we require.
  \item[(b)] If $x$ is not occupied in $C$, then by Claim~\ref{clm:clm1}, $t \ge l-1$. 
  If $t \ge l$, we are done. If $t=l-1$, then we will derive a contradiction. 
  Consider an attack on an edge $xu$ such that $u \in V(B)$. 
  While defending this attack, a guard must move from $u$ to $x$. 
  Let $\tilde{C}$ be the new configuration in $G$ and let $\tilde{S}$ be the set of vertices on which guards are present in $\tilde{C}$. 
  Note that no guards from $V(G) \setminus V(G')$ can move to any vertex of $V(G') \setminus \{x\}$ in this transition from 
  $C$ to $\tilde{C}$, because $x$ is a cut vertex in $G$. 
  Therefore, in $\tilde{C}$, the total number of guards on $V(G') \setminus \{x\}$ is less than $l-1$, contradicting 
  Claim~\ref{clm:clm1}. Therefore, $t=l$ and the lemma holds for $G'$.
 \end{enumerate}
  Thus, by induction, the lemma holds for every graph. 
  \qed
\end{proof}
\begin{remark}
The above lemma holds for both the models of the eternal vertex cover problem; the first model in which the number of 
guards permitted on a vertex in any configuration is limited to one and the 
second model, where this restriction is not there. However,
it is possible that, in the second model, the number of vertices on which guards are present could be smaller than 
$\mvc_{_{X\cup \{x\}}}(G)$ in some valid configurations. An example illustrating this subtlety is shown in Figure~\ref{fig:figure1}.
In order to address this subtlety, the proof of Lemma~\ref{lem:evc-cut-property} employs a careful interplay between the two quantities a) the number of guards 
in a configuration and b) the number of vertices on which guards are present in a configuration. 
\end{remark}
\begin{figure}[h] 
\centering
\includegraphics[scale=0.7]{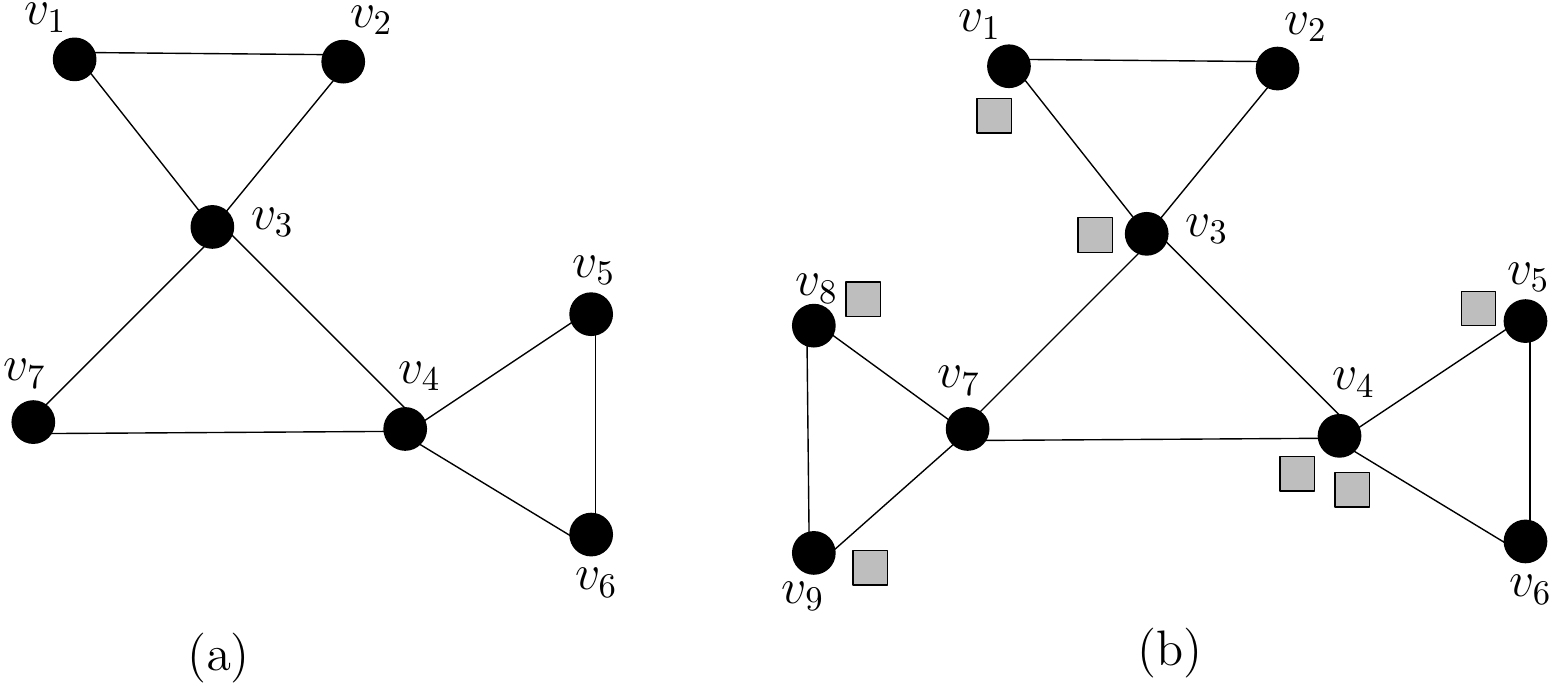}
\caption{ Any vertex cover of the graph in (a) that contains vertex $v_7$ and both the cut vertices must be of size at least $5$. 
The graph in (b) is a $v_7$-extension of the graph in (a). Positions of guards in an eternal vertex cover configuration of 
the graph in (b) are indicated using gray squares. 
This is a valid configuration. Note that, only four vertices of the graph in (a) are occupied in the configuration shown in (b).}
 \label{fig:figure1}
 \end{figure}
 
\begin{theorem}\label{thm:lower-bound}
 For any connected graph $G$, $\evc(G) \ge \mvc_{_X}(G)$, where $X$ is the set of cut vertices of $G$. 
\end{theorem}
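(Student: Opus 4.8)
The plan is to derive the theorem directly from the EVC-cut-property (Lemma~\ref{lem:evc-cut-property}) by applying it simultaneously to all $x$-components of $G$ at a single fixed cut vertex $x$. If $G$ has no cut vertex, then $X=\emptyset$, $\mvc_X(G)=\mvc(G)$, and the bound is the well-known inequality $\mvc(G)\le\evc(G)$; so I would assume $X\neq\emptyset$ and fix any $x\in X$. Let $G_1,\dots,G_m$ be the $x$-components of $G$. By definition $G$ is an $x$-extension of each $G_i$, so the EVC-cut-property is available for every $G_i$ with $G$ itself playing the role of the extension.

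The first step is a bookkeeping observation about cut vertices. Since $G_i\setminus x$ is a single connected component of $G\setminus x$, the vertex $x$ is not a cut vertex of $G_i$, whereas every other vertex of $G_i$ is a cut vertex of $G_i$ if and only if it is one of $G$. Hence the cut-vertex set of $G_i$ is exactly $X(G_i)\setminus\{x\}$, and so $\mvc_{(X(G_i)\setminus\{x\})\cup\{x\}}(G_i)=\mvc_{X(G_i)}(G_i)=:k_i$. Applying Lemma~\ref{lem:evc-cut-property} to $G_i$ then tells me that, in every eternal vertex cover configuration $C$ of $G$, the number of guards on $V(G_i)\setminus\{x\}$ is at least $k_i-1$.

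Next I fix an arbitrary eternal vertex cover configuration $C$ of $G$ and let $t$ be its number of guards (all lying on $V(G)$). Since the sets $V(G_i)\setminus\{x\}$ partition $V(G)\setminus\{x\}$, summing the per-component bounds shows that the number of guards on $V(G)\setminus\{x\}$ is at least $\sum_i(k_i-1)=\mvc_X(G)-1$, where the last equality is Lemma~\ref{lem:cut-sum}. If $x$ carries a guard in $C$, then $t\ge 1+\sum_i(k_i-1)=\mvc_X(G)$ and we are done. Otherwise $x$ is unoccupied, $t$ equals the count on $V(G)\setminus\{x\}$, and the only case not yet settled is $t=\mvc_X(G)-1$, which forces every component bound to be tight, i.e.\ exactly $k_i-1$ guards on each $V(G_i)\setminus\{x\}$.

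The main obstacle, exactly as in the proof of Lemma~\ref{lem:evc-cut-property}, is this tight, unoccupied case, and I would dispose of it with the same dynamic (eternal) argument. Since $x$ is unoccupied and $C$ is a vertex cover, some neighbour $u$ of $x$, lying in some $V(G_{j_0})\setminus\{x\}$, must be occupied; I attack the edge $xu$. To defend, a guard must cross $xu$, and as $x$ is empty this guard must move from $u$ to $x$. Because $x$ is a cut vertex of $G$, every vertex of $V(G_{j_0})\setminus\{x\}$ has all of its neighbours inside $V(G_{j_0})$, so no guard can enter $V(G_{j_0})\setminus\{x\}$ except from $x$ (which is empty in $C$) or from within $V(G_{j_0})\setminus\{x\}$ itself; hence the count on $V(G_{j_0})\setminus\{x\}$ cannot increase, while the guard leaving $u$ strictly decreases it. The resulting configuration is again an eternal vertex cover configuration of $G$, now with fewer than $k_{j_0}-1$ guards on $V(G_{j_0})\setminus\{x\}$, contradicting Lemma~\ref{lem:evc-cut-property} applied to $G_{j_0}$. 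This rules out $t=\mvc_X(G)-1$, so every configuration of $G$ has at least $\mvc_X(G)$ guards, giving $\evc(G)\ge\mvc_X(G)$.
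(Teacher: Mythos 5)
Your proof is correct and follows essentially the same route as the paper: both apply the EVC-cut-property (Lemma~\ref{lem:evc-cut-property}) to the $x$-components at a cut vertex and combine the resulting guard counts via Lemma~\ref{lem:cut-sum}. The only difference is that your closing attack-on-$xu$ argument is redundant: when $x$ is unoccupied, the lemma already guarantees at least $\mvc_{X(G_i)}(G_i)$ guards on $V(G_i)$, all of which must then lie on $V(G_i)\setminus\{x\}$, so summing over the (at least two) $x$-components already yields strictly more than $\mvc_{X}(G)$ guards with no dynamic step --- which is exactly how the paper disposes of this case.
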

\begin{proof}
Let $C$ be an eternal vertex cover configuration of $G$ and $S$ be the set of all vertices of $G$ containing guards in $C$. 
Suppose $\evc(G)<\mvc_{_X}(G)$. Then, there exists a vertex $x \in X$ such that $x \notin S$. Since every graph satisfies EVC-cut-property
by Lemma~\ref{lem:evc-cut-property}, for each $x$-component $G_i$ of $G$, 
exactly $\mvc_{_{X(G_i)}}(G_i)$ guards are present on $V(G_i) \setminus \{x\}$. 
Therefore, the total number of guards is at least $\sum_{G_i \in \mathcal{C}_x(G)}{\mvc_{_{X(G_i)}}\left(G_i\right)}$. Since there are
at least two $x$-components, by comparing this expression with the RHS of the equation in Lemma~\ref{lem:cut-sum}, we can see that
the total number of guards is more than $\mvc_X(G)$. This contradicts our initial assumption.
\qed
\end{proof}
\begin{observation}\label{obs:cutvertex-occupy}
Let $G$ be a connected graph and let $X$ be the set of cut vertices of $G$. 
If $\evc(G)=\mvc_{_X}(G)$, then in every minimum eternal vertex cover configuration of $G$, there are guards on each vertex of $X$.  
\end{observation}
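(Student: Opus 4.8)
The plan is to argue by contradiction, reusing the counting argument that drives the proof of Theorem~\ref{thm:lower-bound}, but reading it off in the boundary case $\evc(G)=\mvc_X(G)$. First I would fix an arbitrary minimum eternal vertex cover configuration $C$ of $G$ and suppose, towards a contradiction, that some cut vertex $x \in X$ carries no guard in $C$. Since $x$ is a cut vertex, $G$ is an $x$-extension of each of its $x$-components $G_i \in \mathcal{C}_x(G)$, so Lemma~\ref{lem:evc-cut-property} (the EVC-cut-property) applies to each such $G_i$.

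Next, because $x$ is unoccupied in $C$, for every $x$-component $G_i$ the guards lying on $V(G_i)$ coincide with the guards lying on $V(G_i)\setminus\{x\}$; by the EVC-cut-property this number is at least $\mvc_{X(G_i)}(G_i)$, recalling that $x \in X(G_i)$. Distinct $x$-components meet only at $x$, which is empty, so the corresponding guard sets are pairwise disjoint and the total number of guards in $C$ is at least $\sum_{G_i \in \mathcal{C}_x(G)} \mvc_{X(G_i)}(G_i)$. I would then invoke Lemma~\ref{lem:cut-sum}, rewriting $\mvc_X(G) = 1 + \sum_{G_i}(\mvc_{X(G_i)}(G_i)-1)$, equivalently $\sum_{G_i} \mvc_{X(G_i)}(G_i) = \mvc_X(G) - 1 + |\mathcal{C}_x(G)|$. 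A cut vertex always yields at least two $x$-components, so $|\mathcal{C}_x(G)| \ge 2$, whence the total number of guards is at least $\mvc_X(G)+1$. But $C$ is a minimum configuration, hence has exactly $\evc(G) = \mvc_X(G)$ guards, a contradiction. As $C$ and $x$ were arbitrary, every cut vertex is occupied in every minimum configuration.

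The main point—more a bookkeeping subtlety than a genuine obstacle—is justifying the summation in the second step: it is legitimate precisely because $x$ is the unique vertex shared by distinct $x$-components and, being empty, contributes no guard to be double-counted. It is exactly the emptiness of $x$ that both enables the disjoint sum and, through the $+1$ arising from $|\mathcal{C}_x(G)| \ge 2$, forces the strict overshoot past $\mvc_X(G)$. This strict overshoot is the same phenomenon already exploited in Theorem~\ref{thm:lower-bound}, so no new inequality is needed; the observation simply specializes that strictness to the tight case $\evc(G)=\mvc_X(G)$, where any slack of a single guard is already fatal.
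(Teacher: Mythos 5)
Your proof is correct and takes essentially the same approach as the paper, which literally states that the argument is identical to that of Theorem~\ref{thm:lower-bound}: apply the EVC-cut-property to each $x$-component of an unoccupied cut vertex $x$, sum the disjoint guard counts, and compare with Lemma~\ref{lem:cut-sum} to overshoot $\mvc_{_X}(G)$. Your write-up is in fact more explicit than the paper's about why the summation is disjoint and where the decisive $+1$ from $|\mathcal{C}_x(G)|\ge 2$ comes from.
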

\begin{proof}
  For contradiction, assume that there exists a minimum eternal vertex cover configuration $C$ of $G$
with a cut vertex $x$ unoccupied. Rest of the proof is exactly the same as in the proof of Theorem~\ref{thm:lower-bound}.  
\qed
\end{proof}
 \section{Algorithmic implications}
In this section, we first prove some general implications of Theorem~\ref{thm:lower-bound}, which are used for deriving algorithmic results for some well-known graph classes.
\begin{definition}[Graph class\footnote{Note that the definition of this graph class is more general than the one in \cite{BCFPRW:Arxiv}.} $\mathcal{F}$]
$\mathcal{F}$ is defined as the family of all connected graphs $G$ satisfying the following property: if $X$ is the 
set of cut vertices of $G$ and $S$ is any vertex cover of $G$ with $X \subseteq S$ and $|S|=\mvc_{_X}(G)$, 
then the induced subgraph $G[S]$ is connected.
\end{definition}
For any graph $G$ and $S \subseteq V(G)$, let $\evc_{_S}(G)$ denote the minimum number $k$ such that $G$ has an eternal vertex cover class $\mathcal{C}$ with $k$ guards in which all vertices of $S$ are 
occupied in every configuration of $\mathcal{C}$. 
For an example, let $G$ be a path on three vertices $u$,$v$ and $w$, in which $v$ is the degree-two vertex. It can be easily seen that $\evc(G)=2$. Since \{\{u,v\}, \{v,w\}\} is an eternal vertex cover class of $G$ in which each configuration has $v$ occupied, $\evc_{\{v\}}(G)=2$.
Since $G$ has no eternal vertex cover class in which each configuration contains $u$ and has exactly two vertices, it follows that $\evc_{\{u\}}(G)=3$.
By Observation~\ref{obs:cutvertex-occupy}, we have the following 
generalization of Corollary~2 of \cite{BCFPRW:CALDAM}.
\begin{theorem}\label{thm:evc-subset}
 Let $G$ be a graph in $\mathcal{F}$ with at least two vertices and $X$ be the set of cut vertices of $G$. 
  If for every vertex $v \in V(G) \setminus X$, $\mvc_{_{X \cup \{v\}}}(G)=\mvc_{_X}(G)$, then $\evc(G)=\evc_{_X}(G)=\mvc_{_{X}}(G)$.
 Otherwise,  $\evc(G) =\evc_{_X}(G)= \mvc_{_X}(G)+1$.
\end{theorem}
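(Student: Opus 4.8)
My plan is to pin down all three quantities by a sandwich. On one side, every eternal vertex cover class in which $X$ is always occupied is in particular an eternal vertex cover class, so $\evc(G)\le\evc_{_X}(G)$; on the other side, Theorem~\ref{thm:lower-bound} gives $\evc(G)\ge\mvc_{_X}(G)$. Writing $m=\mvc_{_X}(G)$, the plan is therefore to (i) sharpen the lower bound to $\evc(G)\ge m+1$ in the ``otherwise'' case, and (ii) construct, in each case, an explicit defense strategy that keeps all of $X$ occupied while using $m$ guards (first case) or $m+1$ guards (second case); this bounds $\evc_{_X}(G)$ from above and closes the sandwich $\mvc_{_X}(G)\le\evc(G)\le\evc_{_X}(G)$ to force equality of all three quantities.

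For the sharpened lower bound, suppose some $v\in V(G)\setminus X$ has $\mvc_{_{X\cup\{v\}}}(G)>m$, and assume for contradiction that $\evc(G)=m$. By Observation~\ref{obs:cutvertex-occupy}, every minimum eternal vertex cover configuration carries a guard on each vertex of $X$; since it has only $m$ guards, its occupied set is a vertex cover containing $X$ of size at most $m$, and hence cannot contain $v$, because otherwise it would witness $\mvc_{_{X\cup\{v\}}}(G)\le m$. As $G$ is connected with at least two vertices, $v$ has a neighbour $w$, which must be occupied, so attacking the edge $vw$ forces a guard to move from $w$ onto $v$. The resulting configuration lies in the same minimum class, so by Observation~\ref{obs:cutvertex-occupy} it again occupies all of $X$, and now also $v$; its occupied set is thus a vertex cover of size at most $m$ containing $X\cup\{v\}$, contradicting $\mvc_{_{X\cup\{v\}}}(G)>m$. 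Hence $\evc(G)\ge m+1$ in this case.

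For the upper bounds I would maintain the invariant that the occupied vertices always form a minimum vertex cover $S\supseteq X$ of size $m$ (first case), respectively such an $S$ together with one additional, mobile guard (second case). Membership in $\mathcal{F}$ guarantees that $G[S]$ is connected, and this drives the defense. If an attacked edge has both endpoints occupied, the two guards simply swap, leaving the configuration unchanged while satisfying the crossing requirement. If the attacked edge $uv$ has $u\in S$ and $v\notin S$, then all neighbours of $v$ lie in $S$, and I would relay guards along a path of the connected graph $G[S]$: the guard at $u$ steps onto $v$, and each successive guard along the path steps into the vertex just vacated, so that the net effect is to migrate $S$ to $S'=(S\setminus\{z\})\cup\{v\}$ for a suitably chosen vacatable vertex $z\in S\setminus X$. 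Since $S'$ is again a minimum vertex cover of size $m$ containing $X$, it too is connected by the $\mathcal{F}$ hypothesis, so the invariant is preserved and the attack on $uv$ is answered. In the first case the hypothesis $\mvc_{_{X\cup\{v\}}}(G)=m$ supplies a minimum vertex cover containing $X\cup\{v\}$, which is what makes a vacatable $z$ available; in the second case, where some vertices are not flexible, the extra guard is used to absorb the single unit of deficiency created at $u$ when no size-$m$ migration introducing $v$ exists.

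The main obstacle is precisely this relay/reconfiguration step: proving that an attack driving a guard into an empty vertex can always be answered by one parallel move that stays within the family of minimum vertex covers containing $X$ (first case), or within that family augmented by a single spare guard (second case). Concretely, one must show that a vacatable $z\in S\setminus X$ exists --- equivalently, a vertex of $S\setminus X$ whose only neighbour in the maximum independent set $V(G)\setminus S$ is $v$ --- and here the connectivity of $G[S]$ forced by $\mathcal{F}$, together with the existence of a minimum vertex cover containing $v$, is exactly what is needed; in the second case one must additionally verify that one spare guard always suffices to cover the unique deficiency. Once this reconfiguration lemma is established, the constructed strategies witness $\evc_{_X}(G)\le m$ in the first case and $\evc_{_X}(G)\le m+1$ in the second; combining these with $\mvc_{_X}(G)\le\evc(G)\le\evc_{_X}(G)$, and with the sharpened bound $\evc(G)\ge m+1$ in the second case, forces all the claimed equalities.
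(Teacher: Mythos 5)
Your overall architecture is the same as the paper's: the sandwich $\mvc_{_X}(G)\le\evc(G)\le\evc_{_X}(G)$ from Theorem~\ref{thm:lower-bound}, a sharpened lower bound of $\mvc_{_X}(G)+1$ in the ``otherwise'' case, and upper bounds on $\evc_{_X}(G)$ of $m$ and $m+1$ respectively. Your sharpened lower bound (attack an edge incident to the inflexible vertex $v$, then observe that the resulting occupied set is a vertex cover of size at most $m$ containing $X\cup\{v\}$, using Observation~\ref{obs:cutvertex-occupy} to keep $X$ occupied) is correct, and in fact spells out a step that the paper's write-up leaves implicit when it jumps to $\mvc_{_X}(G)<\evc(G)$.

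The gap is in the upper bounds, and you flag it yourself: everything rests on a reconfiguration lemma that you do not prove. The paper does not prove it either---it imports Lemma~2 and Theorem~1 of \cite{BCFPRW:CALDAM} for the first case and the Klostermeyer--Mynhardt bound $\evc_{_X}(G)\le|S|+1$ for a connected vertex cover $S\supseteq X$ for the second---so citing those results would close your argument; but the mechanism you sketch in their place does not work in general. You assume the defence can always pass to $S'=(S\setminus\{z\})\cup\{v\}$ for a single vacatable $z\in S\setminus X$, i.e.\ a vertex whose only neighbour outside $S$ is $v$. Such a $z$ need not exist even when $\mvc_{_{X\cup\{v\}}}(G)=m$: the guaranteed cover $S_v\supseteq X\cup\{v\}$ of size $m$ may differ from $S$ in several vertices (picture $S\setminus S_v=\{z_1,z_2\}$ and $S_v\setminus S=\{v,w\}$ with $z_1$ adjacent to both $v$ and $w$ and $z_2$ adjacent only to $w$; no single swap introducing $v$ yields a vertex cover, yet $S_v$ does). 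The actual argument migrates guards from $S$ to $S_v$ along a matching between $S\setminus S_v$ and $S_v\setminus S$ (which exists because both are minimum vertex covers), and needs the connectivity of $G[S]$ plus additional care to force the guard entering $v$ to come specifically from $u$; likewise, ``one spare guard absorbs the deficiency'' in the second case is precisely the content of the Klostermeyer--Mynhardt theorem, not an observation. Until that lemma is proved or cited, the upper-bound half of your proof is incomplete, and the relay-along-a-path version of it is false as stated.
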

\begin{proof}
By Theorem~\ref{thm:lower-bound}, we have $\mvc_{_X}(G) \le \evc(G)$ and we have $\evc(G) \le \evc_{_X}(G)$. 
\begin{itemize}
 \item If for every vertex $v \in V(G) \setminus X$, $\mvc_{_{X \cup \{v\}}}(G)=\mvc_{_X}(G)$, then by Lemma~2 of \cite{BCFPRW:CALDAM},
 $\evc_{_X}(G)=\mvc_{_X}(G)$ and hence, $\evc(G)=\evc_{_X}(G)=\mvc_{_{X}}(G)$. 
 \item If for some vertex $v \in V(G) \setminus X$, $\mvc_{_{X \cup \{v\}}}(G) \ne \mvc_{_X}(G)$, then by Theorem~1 of 
 \cite{BCFPRW:CALDAM}, $\evc_{_X}(G) \ne \mvc_{_X}(G)$. 
  Let $S$ be any minimum sized vertex cover of $S$ that contains all vertices of $X$. Since $S$ is a connected vertex cover of $G$,
 by a result by Klostermeyer and Mynhardt~\cite{Klostermeyer2009}, $\evc(G) \le \evc_{_X}(G) \le |S|+1=\mvc_{_X}(G)+1$. Thus, we have 
  $\mvc_{_X}(G)<\evc(G)=\evc_{_X}(G)=\mvc_{_{X}}(G)+1$. 
 \end{itemize}
 \qed
 \end{proof}
 The following corollary is a generalization of Remark~3 of \cite{BCFPRW:Arxiv}.
 \begin{corollary}\label{cor:compute-evc}
 Let $G$ be a graph in $\mathcal{F}$ with at least two vertices and $X$ be the set of cut vertices of $G$. 
 Then, $\evc(G) = \min \{ k: \forall v \in V(G),$ $G$ has a vertex cover $S_v$ of size $k$
 such that $X \cup \{v\} \subseteq S_v\}$.
\end{corollary}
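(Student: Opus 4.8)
The plan is to reduce the statement to Theorem~\ref{thm:evc-subset} by reinterpreting the right-hand side. Write $k^\star$ for the quantity $\min\{k : \forall v \in V(G),\ G \text{ has a vertex cover } S_v \text{ of size } k \text{ with } X\cup\{v\}\subseteq S_v\}$. My first step is to express $k^\star$ as a maximum of constrained minimum vertex cover sizes. For a fixed vertex $v$, the least size of a vertex cover of $G$ containing $X\cup\{v\}$ is by definition $\mvc_{_{X\cup\{v\}}}(G)$, and any such cover can be padded up, one vertex at a time, to a vertex cover of any prescribed size between $\mvc_{_{X\cup\{v\}}}(G)$ and $|V(G)|$. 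Hence a cover $S_v$ of size $k$ with $X\cup\{v\}\subseteq S_v$ exists precisely when $\mvc_{_{X\cup\{v\}}}(G)\le k\le |V(G)|$, and since each $\mvc_{_{X\cup\{v\}}}(G)$ is at most $|V(G)|$, the smallest $k$ that works for all $v$ simultaneously is
\[ k^\star = \max_{v\in V(G)} \mvc_{_{X\cup\{v\}}}(G). \]
After this reduction, it remains to prove $\evc(G)=\max_{v\in V(G)}\mvc_{_{X\cup\{v\}}}(G)$.

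My second step is to determine the possible values of this maximum. I would record the elementary sandwich $\mvc_{_X}(G)\le\mvc_{_{X\cup\{v\}}}(G)\le\mvc_{_X}(G)+1$, valid for every $v$: the lower bound holds because requiring $v$ in the cover is only an additional constraint, while the upper bound follows by adjoining $v$ to a minimum vertex cover of $G$ that contains $X$. Since $\mvc_{_{X\cup\{v\}}}(G)=\mvc_{_X}(G)$ trivially when $v\in X$, the maximum over $V(G)$ equals $\mvc_{_X}(G)+1$ exactly when some $v\in V(G)\setminus X$ satisfies $\mvc_{_{X\cup\{v\}}}(G)\ne\mvc_{_X}(G)$, and equals $\mvc_{_X}(G)$ otherwise.

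The final step is to read off the result from Theorem~\ref{thm:evc-subset}, whose two cases are governed by exactly this dichotomy. In the first case the maximum equals $\mvc_{_X}(G)$ and the theorem asserts $\evc(G)=\mvc_{_X}(G)$; in the second the maximum equals $\mvc_{_X}(G)+1$ and the theorem asserts $\evc(G)=\mvc_{_X}(G)+1$. Either way $\evc(G)=k^\star$, as required. I expect the only delicate point to be the first step: one must be careful that it is the existence of a cover of size \emph{exactly} $k$ (rather than at most $k$) that couples the padding argument with the bound $|V(G)|$, and that $V(G)\setminus X$ is nonempty—which holds since every connected graph on at least two vertices has a non-cut vertex—so that the case analysis of Theorem~\ref{thm:evc-subset} applies as stated. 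Everything after the reformulation is routine bookkeeping.
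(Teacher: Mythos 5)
Your proof is correct and follows the route the paper intends: the paper states this corollary without proof as an immediate consequence of Theorem~\ref{thm:evc-subset}, and indeed later rewrites it in exactly your form $\evc(G)=\max\{\mvc_{_{X\cup\{v\}}}(G): v\in V(G)\}$ in the PTAS lemma. Your padding argument and the sandwich $\mvc_{_X}(G)\le\mvc_{_{X\cup\{v\}}}(G)\le\mvc_{_X}(G)+1$ correctly supply the bookkeeping the paper leaves implicit.
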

 The following result is a generalization of Corollary~3 of \cite{BCFPRW:Arxiv}.
\begin{observation}\label{obs:evc-bic-np}
 Given a graph $G \in \mathcal{F}$ and an integer $k$, deciding whether $\evc(G) \le k$ is in NP.
\end{observation}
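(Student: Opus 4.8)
The plan is to read off an NP certificate directly from Corollary~\ref{cor:compute-evc}. That corollary characterizes $\evc(G)$ for $G \in \mathcal{F}$ entirely in terms of vertex covers: since a vertex cover of size $k$ (with $k \le |V(G)|$) containing a prescribed set exists iff one of size at most $k$ containing that set exists (pad with extra vertices), the corollary is equivalent to the threshold statement
\[
\evc(G) \le k \iff \text{for every } v \in V(G),\ G \text{ has a vertex cover } S_v \text{ with } |S_v| \le k \text{ and } X \cup \{v\} \subseteq S_v,
\]
where $X$ is the set of cut vertices of $G$. First I would establish this equivalence (noting the trivial single-vertex case, where $\evc(G)=0$ and the corollary's hypothesis of at least two vertices does not apply, is handled separately).

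Next I would exhibit the verifier. For a yes-instance the certificate is the family $\{S_v : v \in V(G)\}$ of vertex covers guaranteed by the equivalence. There are exactly $n = |V(G)|$ of them, each of size at most $k \le n$, so the certificate has size $O(n^2)$, polynomial in the input. The verifier first recomputes $X$ itself in linear time using the standard DFS-based cut-vertex algorithm (so $X$ need not be part of the certificate). Then, for each $v \in V(G)$, it checks in polynomial time that (i) $S_v$ covers every edge of $G$, (ii) $|S_v| \le k$, and (iii) $X \cup \{v\} \subseteq S_v$. Each check is elementary and there are only $n$ of them, so the whole verification runs in polynomial time.

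Correctness is then immediate from the equivalence: a certificate passing all checks exists precisely when the right-hand condition holds, which is precisely when $\evc(G) \le k$. This gives both soundness and completeness of the verifier, establishing membership in NP.

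The one point worth emphasizing, and the conceptual obstacle, is that the characterizing condition is \emph{universally} quantified over all vertices $v$, which on its face looks like a co-NP or $\Pi_2$ condition rather than an NP one. What rescues NP membership is that the inner existential witness for each $v$ is a single vertex cover of polynomial size, and there are only polynomially many vertices $v$; hence all $n$ witnesses can be bundled into one polynomial-size certificate and verified independently. I would make sure this observation is stated explicitly, since it is the crux of why the decision problem lands in NP, as well as noting that we rely on the promise $G \in \mathcal{F}$ so that Corollary~\ref{cor:compute-evc} applies.
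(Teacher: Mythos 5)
Your proposal is correct and follows essentially the same route as the paper: both invoke Corollary~\ref{cor:compute-evc} and take as the NP certificate a family of at most $|V(G)|$ vertex covers of size at most $k$, one containing $X \cup \{v\}$ for each vertex $v$, verified in polynomial time. Your version merely spells out details the paper leaves implicit (the padding argument converting ``size exactly $k$'' to ``size at most $k$,'' recomputing $X$ by DFS, and the bundling of polynomially many existential witnesses to tame the universal quantifier).
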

\begin{proof}
Consider any $G \in \mathcal{F}$ with at least two vertices and let $X$ be the set of cut vertices of $G$.
 By Corollary~\ref{cor:compute-evc}, $\evc(G) = \min \{ k: \forall v \in V(G),\text{ $G$ has a vertex cover}\\ \text{$S_v$ of size $k$
 such that } X \cup \{v\} \subseteq S_v\}$.  
 To check if $\evc(G) \le k$, the polynomial time verifiable certificate consists of at most $|V|$ vertex covers of size at most $k$ such that for each vertex 
$v\in V$, there exists a vertex cover in the certificate containing all vertices of $ X \cup \{v\}$. 
\qed
\end{proof}
\subsection{Graphs with locally connected blocks}
A graph $G$ is \textit{locally connected} if for every vertex $v$ of $G$, its open neighborhood $N_G(v)$ induces a connected 
 subgraph in $G$ \cite{chartrand1974}. Biconnected chordal graphs and biconnected internally triangulated graphs are some well-known examples of locally
 connected graphs. If every block of a graph $G$ is locally connected, then every vertex cover of $G$ that contains all its cut vertices is connected.
 Hence, $G \in \mathcal{F}$ and by Theorem~\ref{thm:evc-subset}, we have:
\begin{corollary}\label{cor:loc-conn-bound}
 Let $G$ be a connected graph with at least two vertices, such that each block of $G$ is locally connected and let $X$ be the set of cut vertices of $G$. 
 Then, $\mvc_{_X}(G) \le \evc(G) \le \mvc_{_X}(G)+1$. 
 Further, $\evc(G)=\mvc_{_{X}}(G)$ if and only if for every vertex $v \in V(G) \setminus X$, $\mvc_{_{X \cup \{v\}}}(G)=\mvc_{_X}(G)$. 
 In particular, these conclusions hold for chordal graphs and internally triangulated planar graphs that are connected and have at least two vertices.
\end{corollary}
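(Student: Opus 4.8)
The plan is to reduce the entire statement to Theorem~\ref{thm:evc-subset}, which applies to any graph in $\mathcal{F}$. Indeed, once I know $G \in \mathcal{F}$, the dichotomy of Theorem~\ref{thm:evc-subset} gives $\evc(G)=\mvc_{_X}(G)$ precisely when $\mvc_{_{X\cup\{v\}}}(G)=\mvc_{_X}(G)$ for every $v \in V(G)\setminus X$, and $\evc(G)=\mvc_{_X}(G)+1$ otherwise; this simultaneously yields the sandwich bound $\mvc_{_X}(G)\le\evc(G)\le\mvc_{_X}(G)+1$ and the stated equality criterion. So the only substantive thing to supply is the assertion made in the paragraph before the corollary: if every block of $G$ is locally connected, then every vertex cover $S$ of $G$ with $X\subseteq S$ induces a connected subgraph $G[S]$, i.e.\ $G\in\mathcal{F}$.

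To prove this I would first handle a single locally connected block $B$ and show that \emph{any} vertex cover $S_B$ of $B$ induces a connected subgraph $B[S_B]$. The key observation is that $V(B)\setminus S_B$ is independent in $B$ (this is just the defining property of a vertex cover), so no path in $B$ has two consecutive vertices outside $S_B$, and for any $w\in V(B)\setminus S_B$ all its neighbors lie in $S_B$, that is $N_B(w)\subseteq S_B$. Now given $a,b\in S_B$, take any $a$--$b$ path in $B$; at each internal vertex $w\notin S_B$ the two path-neighbors of $w$ lie in $N_B(w)$, and local connectivity of $B$ guarantees that $N_B(w)$ induces a connected subgraph, which is contained in $S_B$. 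Hence the detour through $w$ can be rerouted along a path inside $N_B(w)\subseteq S_B$. Carrying out this rerouting at every such $w$ turns the path into a walk lying entirely in $B[S_B]$, proving that $B[S_B]$ is connected.

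I would then lift this to all of $G$ using the block-cut tree. For each block $B$, the set $S\cap V(B)$ is a vertex cover of $B$, so $B[S\cap V(B)]$ is connected by the previous step (a $K_2$ block is locally connected, so bridges are covered too). Since $X\subseteq S$, every cut vertex of $G$ lies in $S$. For arbitrary $a,b\in S$, I would take the path in the block-cut tree joining a block containing $a$ to a block containing $b$; all cut vertices along this path belong to $X\subseteq S$, and within each block on the path the induced subgraph on $S$ is connected, so I can concatenate these block-internal $S$-paths through the shared cut vertices to obtain an $a$--$b$ path in $G[S]$. This establishes $G\in\mathcal{F}$. For the concluding ``in particular'' clause I would invoke that every block of a connected chordal graph is a biconnected chordal graph (or $K_2$), and every block of an internally triangulated planar graph is a biconnected internally triangulated graph (or $K_2$); all of these are locally connected, so the hypothesis applies and the general conclusions transfer.

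The main obstacle is the single-block rerouting argument: one must check carefully that local connectivity is exactly the property that allows a path detour leaving $S_B$ to be repaired inside $S_B$, and that it is legitimate to work with $N_B$ rather than $N_G$. The latter is unproblematic because any $w\notin S$ is not a cut vertex (cut vertices are in $X\subseteq S$) and therefore lies in a unique block $B$ with $N_G(w)=N_B(w)$, so block-local connectivity is precisely the right notion. Once the block case is settled, the block-cut-tree gluing and the final appeal to Theorem~\ref{thm:evc-subset} are routine.
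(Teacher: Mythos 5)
Your proposal is correct and follows essentially the same route as the paper: the corollary is obtained by showing $G \in \mathcal{F}$ (via local connectivity of blocks) and then invoking Theorem~\ref{thm:evc-subset}. The only difference is that the paper merely asserts, in the preceding paragraph, that locally connected blocks force every vertex cover containing all cut vertices to be connected, whereas you supply the rerouting-through-neighborhoods and block-cut-tree details that justify this assertion.
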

\subsection{Hereditary graph classes}
The following theorem is obtained by generalizing Theorem~3 of \cite{BCFPRW:CALDAM}, by applying Theorem~\ref{thm:evc-subset}.
\begin{theorem}\label{thm:heriditary}
 Let $\mathsf{C}$ be a hereditary graph class such that each biconnected graph in $\mathsf{C}$ is locally connected. 
 If the vertex cover number of any graph in $\mathsf{C}$ can be computed in $O(f(n))$ time, then the eternal vertex cover number
 of any graph $G \in \mathsf{C}$ can be computed in  $O(n.f(n))$ time.
\end{theorem}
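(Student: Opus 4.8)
The plan is to reduce the computation of $\evc(G)$ to a bounded number of ordinary vertex cover computations on induced subgraphs of $G$, each of which stays inside $\mathsf{C}$ by heredity, and then to invoke the dichotomy of Theorem~\ref{thm:evc-subset}. First I would reduce to the connected case: since $\evc$ is additive over connected components and each component is an induced subgraph of $G$ (hence in $\mathsf{C}$ by heredity), it suffices to handle a single connected component $K$ with at least two vertices in time $O(|V(K)| \cdot f(n))$ and then sum, with isolated vertices contributing $0$. As the number of components is at most $n$, summing the per-component costs gives $O(n f(n))$.

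Next I would establish that every such component $K$ lies in the class $\mathcal{F}$. The key observation is that each block of $K$ is an \emph{induced} subgraph of $K$: any edge joining two vertices of a $2$-connected block lies on a cycle inside that block and hence belongs to the block, so the block equals the subgraph induced on its vertex set. Consequently each block belongs to $\mathsf{C}$. A nontrivial block is biconnected and therefore locally connected by hypothesis, while a bridge block $K_2$ is trivially locally connected. Thus every block of $K$ is locally connected, and by the observation preceding Corollary~\ref{cor:loc-conn-bound} every vertex cover of $K$ containing all its cut vertices is connected; that is, $K \in \mathcal{F}$.

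With $K \in \mathcal{F}$ in hand, Theorem~\ref{thm:evc-subset} tells us that, writing $X$ for the set of cut vertices of $K$, we have $\evc(K) = \mvc_{_X}(K)$ when $\mvc_{_{X \cup \{v\}}}(K) = \mvc_{_X}(K)$ for every $v \in V(K) \setminus X$, and $\evc(K) = \mvc_{_X}(K)+1$ otherwise. So it only remains to compute $\mvc_{_X}(K)$ and to decide whether this equality holds for all $v \notin X$. The engine for this is the elementary identity $\mvc_{_Y}(H) = |Y| + \mvc(H \setminus Y)$, valid for any $Y \subseteq V(H)$, since a vertex cover of $H$ containing $Y$ is precisely $Y$ together with a vertex cover of the induced subgraph $H \setminus Y$. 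Applying it with $Y = X$ and with $Y = X \cup \{v\}$, and using that $K \setminus Y$ is an induced subgraph of $K$ and hence in $\mathsf{C}$, gives $\mvc_{_X}(K) = |X| + \mvc(K \setminus X)$ and $\mvc_{_{X \cup \{v\}}}(K) = |X| + 1 + \mvc(K \setminus (X \cup \{v\}))$. Therefore $\evc(K) = \mvc_{_X}(K)+1$ exactly when some $v \in V(K) \setminus X$ satisfies $\mvc(K \setminus (X \cup \{v\})) = \mvc(K \setminus X)$, and $\evc(K) = \mvc_{_X}(K)$ otherwise.

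This yields the algorithm: compute the cut vertices $X$ of $K$ by a block decomposition in $O(|V(K)| + |E(K)|)$ time, invoke the vertex cover oracle once on $K \setminus X$, and once on $K \setminus (X \cup \{v\})$ for each $v \in V(K) \setminus X$. Each oracle call runs in $O(f(n))$ time on a graph in $\mathsf{C}$ with at most $n$ vertices, so the $O(|V(K)|)$ calls cost $O(|V(K)| \cdot f(n))$, which absorbs the decomposition cost (the oracle must already read its input). Summing over all components gives the claimed $O(n f(n))$ bound. The main obstacle is conceptual rather than computational: one must verify that the forced-vertex quantities $\mvc_{_X}$ and $\mvc_{_{X \cup \{v\}}}$ reduce to ordinary vertex cover numbers on graphs that remain in $\mathsf{C}$, which is exactly where heredity of $\mathsf{C}$ and the fact that blocks are induced subgraphs are both used; once this reduction is in place, the remainder is just the bookkeeping of the above dichotomy.
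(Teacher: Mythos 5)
Your proposal is correct and follows essentially the same route as the paper: place the graph in $\mathcal{F}$ via local connectivity of its blocks, apply the dichotomy of Theorem~\ref{thm:evc-subset} (which the paper invokes in the packaged form of Corollary~\ref{cor:loc-conn-bound}), and resolve it with $O(n)$ calls to the vertex cover oracle, using heredity of $\mathsf{C}$ to keep each oracle input inside the class. The only difference is that you spell out steps the paper leaves implicit — the reduction to connected components, the fact that blocks are induced subgraphs (so heredity applies to them and bridges are handled), and the identity $\mvc_{_Y}(H)=|Y|+\mvc(H\setminus Y)$ — which tightens the write-up rather than changing the argument.
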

\begin{proof}
 Let $G$ be a graph in $\mathsf{C}$. 
 Since each block of $G$ is locally connected, by Corollary~\ref{cor:loc-conn-bound}, $\mvc_{_X}(G) \le \evc(G) \le \mvc_{_X}(G)+1$. 
 Further, by Corollary~\ref{cor:loc-conn-bound}, to check whether $\evc(G) = \mvc_{_X}(G)$, it is enough to decide if 
 for every vertex $v \in V \setminus X$, $\mvc_{_{X \cup \{v\}}}(G)=\mvc_{_X}(G)$. Since minimum vertex cover computation can be done 
 for graphs of $\mathsf{C}$ in $O(f(n))$ time,
 for a vertex $v$, checking whether $\mvc_{_{X \cup \{v\}}}(G)=\mvc_{_X}(G)$, takes
 only $O(f(n))$ time. Therefore, checking whether $\evc(G) = \mvc_{_X}(G)$ can be done in $O(n. f(n))$ time. 
 \qed
\end{proof}
\subsection{Chordal graphs}
The following theorem is a special case of Theorem~\ref{thm:heriditary}, using the fact that minimum vertex cover computation 
can be done for chordal graphs in $O(m+n)$ time \cite{Tarjan},
 where $m$ is the number of edges and $n$ is the number of vertices of the input graph. This result is a generalization of a 
 result for biconnected chordal graphs in \cite{BCFPRW:Arxiv}.
\begin{theorem}\label{thm:chordal-evc}
 Let $G$ be a chordal graph and $X$ be the set of cut vertices of $G$. 
 Then, $\mvc_{_X}(G) \le \evc(G) \le \mvc_{_X}(G)+1$ and the value of $\evc(G)$ can be determined in $O(n^2+mn)$ time, where 
 $m$ is the number of edges and $n$ is the number of vertices of the input graph.
 \end{theorem}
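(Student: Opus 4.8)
The plan is to combine the structural
inequalities already available with the black-box running time for minimum
vertex cover on chordal graphs. First I would invoke
Corollary~\ref{cor:loc-conn-bound}: since every block of a connected chordal
graph with at least two vertices is itself a biconnected chordal graph and
hence locally connected, the hypotheses of the corollary are met, giving
immediately the bounds $\mvc_{_X}(G) \le \evc(G) \le \mvc_{_X}(G)+1$. This
settles the inequality part of the statement and reduces the computation of
$\evc(G)$ to deciding which of the two values in the range it takes, namely
whether $\evc(G)=\mvc_{_X}(G)$ or $\evc(G)=\mvc_{_X}(G)+1$.

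Next I would appeal to the characterization in the same corollary:
$\evc(G)=\mvc_{_X}(G)$ holds if and only if $\mvc_{_{X\cup\{v\}}}(G)=\mvc_{_X}(G)$
for every vertex $v \in V(G)\setminus X$. Thus the algorithmic task is to
compute $\mvc_{_X}(G)$ once, then for each of the (at most $n$) vertices $v$
check whether forcing $v$ into the vertex cover (in addition to all cut
vertices) increases the minimum size. Here I would use
Theorem~\ref{thm:heriditary} with $\mathsf{C}$ the class of chordal graphs,
which is hereditary and whose biconnected members are locally connected; the
only input it needs is the running time $f(n)$ for minimum vertex cover on
chordal graphs. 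By the cited linear-time algorithm of~\cite{Tarjan}, a
minimum vertex cover of a chordal graph on $n$ vertices and $m$ edges can be
found in $O(m+n)$ time, so $f(n)=O(m+n)$.

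The one point requiring care is relating $\mvc_{_X}$ and $\mvc_{_{X\cup\{v\}}}$
to ordinary minimum vertex cover computations, so that the $O(f(n))$ bound of
Theorem~\ref{thm:heriditary} genuinely applies. Forcing a fixed set of
vertices into the cover is equivalent to deleting those vertices together with
their incident edges and computing an ordinary minimum vertex cover of the
resulting graph (then adding back the forced vertices); since chordal graphs
are closed under vertex deletion, each such subproblem is again chordal and
solvable in $O(m+n)$ time. Substituting $f(n)=O(m+n)$ into the $O(n\cdot f(n))$
bound of Theorem~\ref{thm:heriditary} yields $O(n(m+n))=O(n^2+mn)$ total time,
which is the claimed complexity. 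I do not expect a genuine obstacle: the
inequality is a direct corollary and the running time is a direct
specialization of the hereditary-class theorem; the only thing to verify
explicitly is that the constrained vertex-cover queries reduce to unconstrained
ones on chordal minors obtained by vertex deletion, which is routine.
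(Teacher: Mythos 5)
Your proposal is correct and follows essentially the same route as the paper, which states the theorem as a direct specialization of Theorem~\ref{thm:heriditary} to the hereditary class of chordal graphs (whose biconnected members are locally connected), with $f(n)=O(m+n)$ supplied by the linear-time minimum vertex cover algorithm of~\cite{Tarjan}, yielding $O(n(m+n))=O(n^2+mn)$. The only addition you make is to spell out the routine reduction $\mvc_{_{X\cup\{v\}}}(G)=|X|+1+\mvc(G\setminus(X\cup\{v\}))$ justifying that constrained cover queries stay within the class, a detail the paper leaves implicit inside the proof of Theorem~\ref{thm:heriditary} (and uses explicitly only in its PTAS lemma).
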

\subsection{Internally triangulated planar graphs}
The following lemma is a generalization of a result in \cite{BCFPRW:Arxiv} for biconnected internally triangulated planar graphs.
\begin{lemma}
 Given an internally triangulated planar graph $G$ and an integer $k$, deciding whether $\evc(G) \le k$ is NP-complete.
\end{lemma}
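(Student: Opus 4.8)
The plan is to establish the two directions of NP-completeness separately, namely membership in NP and NP-hardness, and for both I would lean on the machinery already developed in this section together with the corresponding result for biconnected internally triangulated planar graphs from \cite{BCFPRW:Arxiv}. The only genuinely new content is the extension from the biconnected setting to arbitrary connected internally triangulated planar graphs: the hardness itself comes essentially for free by inheritance, while the NP-membership for the broader class is exactly what the lower-bound framework of this paper is designed to deliver.

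For membership in NP, I would first argue that every connected internally triangulated planar graph $G$ with at least two vertices lies in the class $\mathcal{F}$. This is precisely the content already invoked around Corollary~\ref{cor:loc-conn-bound}: each block of such a $G$ is a biconnected internally triangulated graph, hence locally connected, and since every block of $G$ is locally connected, every vertex cover of $G$ containing all of its cut vertices is connected, so $G \in \mathcal{F}$. With $G \in \mathcal{F}$ in hand, I would apply Observation~\ref{obs:evc-bic-np} essentially verbatim: by Corollary~\ref{cor:compute-evc}, $\evc(G)$ equals the least $k$ for which, for every vertex $v$, there is a vertex cover $S_v$ of size $k$ with $X \cup \{v\} \subseteq S_v$, so a collection of at most $|V(G)|$ such vertex covers, each of size at most $k$, constitutes a polynomial-time verifiable certificate for the statement $\evc(G) \le k$. (If one wishes to admit disconnected inputs, the certificate is simply assembled component by component.)

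For NP-hardness I would observe that a biconnected internally triangulated planar graph is, in particular, an internally triangulated planar graph, so the reduction establishing NP-hardness of deciding $\evc(G) \le k$ for biconnected internally triangulated planar graphs in \cite{BCFPRW:Arxiv} already outputs valid instances of the more general problem; hardness therefore transfers immediately to the whole class, and combining the two directions yields NP-completeness. The step I expect to require the most care is the membership argument, specifically the structural claim that every block of a connected internally triangulated planar graph is itself biconnected and internally triangulated, and hence locally connected; this is exactly what places $G$ inside $\mathcal{F}$, and it is the point at which the lower bound of Theorem~\ref{thm:lower-bound}, transmitted through Corollary~\ref{cor:loc-conn-bound} and Observation~\ref{obs:evc-bic-np}, upgrades the known biconnected result to the general one.
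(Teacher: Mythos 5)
Your proposal is correct and follows essentially the same route as the paper: NP-membership is obtained by noting that each block of an internally triangulated planar graph is locally connected, so every vertex cover containing all cut vertices induces a connected subgraph, placing $G$ in $\mathcal{F}$ and allowing Observation~\ref{obs:evc-bic-np} (via Corollary~\ref{cor:compute-evc}) to supply the certificate; NP-hardness is inherited directly from the biconnected case of \cite{BCFPRW:Arxiv}. The extra detail you give about the certificate structure and the containment of biconnected instances in the general problem is exactly what the paper leaves implicit.
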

\begin{proof}
 Since each block of an internally triangulated planar graph $G$ is locally connected, every vertex cover $S$ of $G$ that contains
 all its cut vertices induces a connected subgraph. Therefore, by Observation~\ref{obs:evc-bic-np}, deciding whether $\evc(G) \le k$ is in NP. 
 Since this decision problem is known to be NP-hard for biconnected internally triangulated graphs \cite{BCFPRW:Arxiv}, the lemma follows.  
 \qed
\end{proof}
The existence of a polynomial time approximation scheme for computing the eternal vertex cover number of biconnected internally triangulated planar graphs, 
given in \cite{BCFPRW:Arxiv}, is generalized by the following result.
\begin{lemma}
  There exists a polynomial time approximation scheme for computing the eternal vertex cover number of internally triangulated planar graphs.
 \end{lemma}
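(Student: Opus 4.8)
The plan is to leverage Corollary~\ref{cor:loc-conn-bound}, which guarantees that for a connected internally triangulated planar graph $G$ with cut-vertex set $X$ we have $\mvc_{_X}(G) \le \evc(G) \le \mvc_{_X}(G)+1$. Thus $\evc(G)$ is pinned to within an additive $1$ of $\mvc_{_X}(G)$, and the task of approximating $\evc(G)$ reduces to approximating $\mvc_{_X}(G)$, provided we can absorb this additive slack. The quantity $\mvc_{_X}(G)$ is itself a constrained vertex cover number: forcing every cut vertex into the cover leaves exactly the edges of the induced subgraph $G[V(G)\setminus X]$ to be covered, so $\mvc_{_X}(G) = |X| + \mvc\bigl(G[V(G)\setminus X]\bigr)$. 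Since $G[V(G)\setminus X]$ is an induced subgraph of a planar graph it is planar, and Baker's layered-decomposition technique yields a PTAS for minimum vertex cover on it. Computing $|X|$ exactly (cut vertices are found in linear time) and adding it to a $(1+\delta)$-approximate cover of $G[V(G)\setminus X]$ gives a value $\widetilde{m}$ with $\mvc_{_X}(G) \le \widetilde{m} \le (1+\delta)\mvc_{_X}(G)$.

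To convert an approximation of $\mvc_{_X}(G)$ into a $(1+\epsilon)$-approximation of $\evc(G)$, I would split on the magnitude of $\mvc_{_X}(G)$ using a threshold $T = \lceil 2/\epsilon\rceil$. Deciding whether $\mvc_{_X}(G) \le T$ is done by first computing $|X|$ and then running the standard fixed-parameter vertex cover algorithm on $G[V(G)\setminus X]$ with parameter $T-|X|$; for fixed $\epsilon$ this runs in polynomial time and, when the answer is yes, returns the exact value of $\mvc_{_X}(G)$. In the small regime $\mvc_{_X}(G) \le T$, and hence $\mvc_{_{X\cup\{v\}}}(G) \le T+1$ for every $v$, so each of these constrained vertex cover numbers is computable exactly by the bounded-parameter algorithm in polynomial time. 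By the equality characterization in Corollary~\ref{cor:loc-conn-bound}, checking whether $\mvc_{_{X\cup\{v\}}}(G)=\mvc_{_X}(G)$ holds for all $v\in V(G)\setminus X$ decides whether $\evc(G)=\mvc_{_X}(G)$ or $\evc(G)=\mvc_{_X}(G)+1$, so $\evc(G)$ is computed exactly.

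In the large regime $\mvc_{_X}(G) > T$, hence $\evc(G) > 2/\epsilon$; here I would take $\delta = \epsilon/2$ and output $\widehat{E} = \widetilde{m}+1$. Then $\evc(G) \le \mvc_{_X}(G)+1 \le \widehat{E}$, and $\widehat{E} \le (1+\epsilon/2)\mvc_{_X}(G)+1 \le (1+\epsilon/2)\evc(G) + (\epsilon/2)\evc(G) = (1+\epsilon)\evc(G)$, where the additive $1$ is absorbed because $\evc(G) > 2/\epsilon$. Since any guard count at least $\evc(G)$ admits a valid defense (extra guards simply remain in place), $\widehat{E}$ corresponds to a genuine strategy, so the scheme is a PTAS.

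The main obstacle is exactly this additive-$1$ gap between $\mvc_{_X}(G)$ and $\evc(G)$: a multiplicative $(1+\epsilon)$ guarantee cannot tolerate a fixed additive error when the optimum is small. The two-regime strategy resolves it by computing $\evc(G)$ exactly whenever it is bounded by a function of $\epsilon$ (where fixed-parameter vertex cover is affordable) and falling back on the planar vertex cover PTAS only when $\evc(G)$ is large enough that the additive $1$ is negligible. The remaining points to verify are routine: that planarity is inherited by $G[V(G)\setminus X]$, that the fixed-parameter running time is polynomial for each fixed $\epsilon$, and, for graphs that are not connected, that the scheme is applied component-wise, since $\evc$ of a disconnected graph is the sum of the $\evc$ values of its components.
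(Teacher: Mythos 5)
Your proof is correct, but it takes a genuinely different route from the paper's. The paper never confronts the additive-$1$ gap at all: it invokes Corollary~\ref{cor:compute-evc} to get the \emph{exact} identity $\evc(G) = \max\{\mvc_{_{X\cup\{v\}}}(G) : v \in V(G)\}$, expands each term as $\mvc_{_{X\cup\{v\}}}(G) = |X|+1+\mvc(G\setminus(X\cup\{v\}))$ for $v \notin X$ (and $|X|+\mvc(G\setminus X)$ otherwise), and runs Baker's PTAS once per vertex on these planar instances. Because the $|X|+1$ part is computed exactly and only the planar vertex cover part is approximated, each term is within a $1+\epsilon$ factor, and hence so is their maximum --- no case analysis is needed. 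You instead work from the weaker additive sandwich of Corollary~\ref{cor:loc-conn-bound}, $\mvc_{_X}(G) \le \evc(G) \le \mvc_{_X}(G)+1$, and repair the additive slack with a standard two-regime argument: exact computation via fixed-parameter vertex cover (together with the equality criterion of Corollary~\ref{cor:loc-conn-bound}) when $\mvc_{_X}(G) \le \lceil 2/\epsilon\rceil$, and a single Baker call with the $+1$ absorbed multiplicatively when the optimum is large; your arithmetic in the large regime and your appeal to guard-count monotonicity in the multiple-guards model are both sound. What each buys: the paper's argument is shorter and purely approximation-theoretic, with no threshold or FPT machinery, at the cost of $n$ invocations of Baker's algorithm; your argument needs only one Baker call in the large regime and demonstrates a reusable template --- any class where $\evc$ is pinned within $+1$ of an approximable constrained vertex cover number admits a PTAS --- though in the small regime you still end up relying on the same equality characterization the paper uses globally.
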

\begin{proof}
Let $G$ be an internally triangulated planar graph. 
Let $X$ be the set of cut vertices of $G$. It is possible to compute $X$ in linear time, using a  well-known depth first search based method. 
By Corollary~\ref{cor:compute-evc}, $\evc(G) = \max \{ \mvc_{X \cup\{v\}}(G):  v \in V(G) \}$.
It is easy to see that for a vertex $v\in V(G) \setminus X$, $\mvc_{X \cup\{v\}}(G)=|X|+1+\mvc({G \setminus (X\cup\{v\})})$.
For $v \in V(G)$, $\mvc_{X \cup \{v\}}(G)=\mvc_X(G)=|X|+\mvc({G \setminus X})$.
Using the PTAS designed by Baker et al.~\cite{Baker1994} for computing the vertex cover number of planar graphs, given any
$\epsilon >0$, it is possible to approximate $\mvc({G \setminus (X\cup\{v\})})$
within a $1+\epsilon$ factor, in polynomial time. From this, a polynomial time approximation scheme for computing $\evc(G)$ follows.
\qed
\end{proof}

\end{document}